
\documentclass[10pt]{article}
\usepackage[a4paper]{geometry}
\usepackage[utf8]{inputenc}
\usepackage[T1]{fontenc}
\usepackage{setspace,authblk}
\geometry{left=45mm,right=45mm,top=45mm,bottom=45mm}

\usepackage{amsmath,mathrsfs,amsfonts,amsthm,amssymb}

\usepackage[english]{babel}

\usepackage{graphicx}
 
\usepackage{hyperref}


\title{\singlespace {On the state-space model of unawareness}}

\author[]{Alex A.T. Rathke\thanks{NECCT/FEA-RP/USP, University of S\~ao Paulo. \texttt{alex.rathke@alumni.usp.br}}}

\date{\today}

\theoremstyle{plain}
\newtheorem{theorem}{Theorem}
\newtheorem*{theorem*}{Theorem}

\newtheorem*{proposition*}{Proposition}
\newtheorem{remark}{Remark}
\newtheorem*{remark*}{Remark}

\newtheorem*{condition*}{Condition}
\newtheorem{definition}{Definition}
\newtheorem*{definition*}{Definition}

\newtheorem*{corollary*}{Corollary}

\begin{document}

\maketitle

\begin{abstract} 

We show that the knowledge of an agent carrying non-trivial unawareness violates the standard property of 'necessitation', therefore necessitation cannot be used to refute the standard state-space model. A revised version of necessitation preserves non-trivial unawareness and solves the classical Dekel-Lipman-Rustichini result. We propose a generalised knowledge operator consistent with the standard state-space model of unawareness, including the model of infinite state-space.

\end{abstract}

\noindent\textbf{Keywords:} Knowledge, Unawareness, State-space model, Plausibility, KU introspection, AU introspection, Negative introspection, R necessitation, Symmetry.
\\
\noindent\textbf{JEL Classification:} C70, C72, D80, D82.

\section{Model and results} \label{Model and results}

Recall the standard state-space model of knowledge and unawareness, where we have a finite set of possible states $\Omega$ and a possibility correspondence $P : \Omega \rightarrow 2^{\Omega}$ which maps states $s \in \Omega$ to subsets of $\Omega$. At some state $s$, the set $P(s)$ is the set of states which the agent considers to be possible, i.e. representing the information processing capacity of the agent \cite{geanakoplos1989}, or his epistemic state \cite{bacharach1985}. For some event $E \subseteq \Omega$, we say that the agent with a possibility correspondence $P$ knows $E$ at the state $s$ if $P(s) \subseteq E$. The standard knowledge operator is defined by

\begin{equation} \label{k01}
K(E) := \{ s \in \Omega | P(s) \subseteq E \} = KE . \\
\end{equation}

The set $KE$ is the set of states in which the agent knows that $E$ must have occurred, so $KE \neq \emptyset$ means 'the agent knows $E$'. The set $\neg KE = \Omega \setminus KE$ is the set of states in which the agent does not know $E$. Iterations of the knowledge operator $K$ produce higher-order knowledge, for the agent becomes aware of her knowledge. The intuition is that if the agent knows $E$, than she knows that she knows $E$, than she knows that she knows that she knows $E$, and so on. 

On the other hand, the set of states where the agent is fully unaware of $E$ implies that she does not know $E$, and she does not know that she does not know $E$, and she does not know that she does not know that she does not know $E$, and so on. Iterations of the complement of the knowledge $\neg K$ as defined in Eq. \ref{k01} over the event $E$ defines unawareness $U(E)$, for we have 

\begin{equation} \label{u01}
\begin{array}{rl}
U(E) &:= \neg KE \cap \neg K \neg KE \cap \neg K \neg K \neg KE \dots \\
\\
&= \bigcap_{i = 1}^{\infty} (\neg K)^{i} (E) = UE . \\
\end{array}
\end{equation}

\noindent where $(\neg K)^{i}$ indicates that we iterate the $\neg K$ operator $i$ times\footnote{Some authors refer to Eq. \ref{u01} as 'strong plausibility' \cite{heifetz2006,li2009,galanis2013}. Some authors define different levels of unawareness, following from the number of iterations of $\neg K$ within Eq. \ref{u01}, \cite{fukuda2021}.}. Literature on knowledge and unawareness proposes several properties for the $K$ and $U$ operators, see \cite{bacharach1985,geanakoplos1989,modica1994,dekel1998,samuelson2004,heifetz2006,li2009,chen2012,schipper2014,fukuda2021}. 

The agent carries a non-trivial unawareness if there exist some event $E$ such that $UE \neq \emptyset$ \cite{dekel1998,li2009,chen2012,fukuda2021}, i.e. there is some state $s \in \Omega$ such that the agent is unaware of some event $E$\footnote{Non-trivial unawareness derives from a non-partitional possibility correspondence, such that for some state $s$ and some event $E$, $P(s) \neq \emptyset, s \notin KE, UE \neq \emptyset$.}.

\begin{remark} \label{r01}
Non-trivial unawareness $UE \neq \emptyset$ violates the property of 'negative introspection', therefore $UE \neq \emptyset \rightarrow \neg KE \not\subset K \neg KE$ for any $E \subseteq \Omega$ \cite{modica1994,dekel1998}.
\end{remark}

Negative introspection is defined as $\neg KE \subseteq K \neg KE$, and it means that if the agent does not know the event $E$, then she knows that she does not know $E$. The studies of \cite{modica1994,dekel1998} show that negative introspection breaks down with $UE \neq \emptyset$. Intuitively, while the agent may know that there are events she is unaware of, she cannot know the exact events she does not know, therefore $UE \neq \emptyset \rightarrow \neg KE \not\subset K \neg KE$. Otherwise, negative introspection implies that the agent has a partitional possibility correspondence $P$ and she is aware of everything, $UE = \emptyset$ \cite{dekel1998,schipper2014}. 

We apply Remark \ref{r01} for one main result.

\begin{theorem} \label{t01}
For any knowledge operator $K^{\prime}$, non-trivial unawareness $UE \neq \emptyset$ violates the property of 'necessitation', therefore $UE \neq \emptyset \rightarrow K^{\prime} \Omega \neq \Omega$.
\end{theorem}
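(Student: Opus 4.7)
The plan is to derive $K^{\prime}\Omega \neq \Omega$ by contradiction, leveraging Remark~\ref{r01}. I would suppose both $UE \neq \emptyset$ and, for contradiction, that necessitation $K^{\prime}\Omega = \Omega$ holds, then exhibit a state of $\Omega$ at which the agent cannot coherently be said to know the universal event.

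First, I would fix a state $s \in UE$, which exists by hypothesis. Unpacking Eq.~\ref{u01} yields $s \in \neg KE$ and $s \in \neg K\neg KE$ together with all higher iterations, so at $s$ the agent is fully unaware of $E$; by Remark~\ref{r01}, negative introspection fails precisely at such a state. The reading I want to exploit is that the agent at $s$ cannot distinguish $E$ from $\neg E$ because she does not even entertain $E$ as an event.

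Second, I would argue $s \notin K^{\prime}\Omega$ via the decomposition $\Omega = E \cup \neg E$. Since $s \in UE$, the agent is unaware of $E$ at $s$; by the symmetry property of awareness (unawareness of $E$ entails unawareness of $\neg E$), she is also unaware of $\neg E$ at $s$. Any knowledge operator $K^{\prime}$ faithful to the unawareness structure should satisfy the minimal requirement that knowing an event presupposes awareness of it, so $s$ cannot belong to the awareness of the composite event $E \cup \neg E = \Omega$. Hence $s \notin K^{\prime}\Omega$, contradicting the assumed necessitation.

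The main obstacle will be making the passage from Remark~\ref{r01} to the concrete conclusion $s \notin K^{\prime}\Omega$ fully rigorous for an \emph{arbitrary} $K^{\prime}$, since this requires pinning down the minimal knowledge--awareness properties that the word ``knowledge operator'' is meant to carry in the theorem. My plan relies on two such properties -- symmetry of awareness and the bound that knowledge requires awareness -- which are standard in the references cited in Section~\ref{Model and results}. An alternative, more axiomatic route would invoke the Dekel--Lipman--Rustichini impossibility directly: necessitation together with plausibility and KU introspection forces $UE = \emptyset$ in the standard state-space model, so the only way to reconcile $UE \neq \emptyset$ with the remaining axioms is to give up necessitation, which is exactly the content of the theorem.
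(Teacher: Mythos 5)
Your proposal takes a genuinely different route from the paper and, as written, its decisive step does not go through. The paper never reasons about a concrete state $s \in UE$: it first derives necessitation \emph{from} negative introspection (from $K'\emptyset = \emptyset$ it gets $\neg K'\emptyset = \Omega$, and negative introspection then gives $\Omega = \neg K'\emptyset \subseteq K'\neg K'\emptyset = K'\Omega$, hence $K'\Omega = \Omega$), and then invokes Remark \ref{r01} that $UE \neq \emptyset$ destroys negative introspection. You instead try to exhibit a state lying outside $K'\Omega$, and the obstacle you yourself flag --- what an ``arbitrary'' $K'$ is assumed to satisfy --- is exactly where your argument breaks.

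The gap is the passage from ``$s$ is unaware of $E$'' and ``$s$ is unaware of $\neg E$'' to ``$s$ is not aware of $E \cup \neg E = \Omega$.'' Unawareness is not closed under unions, and no axiom in the paper licenses this step; on the contrary, for the operator $K$ of Eq.~\ref{k01} one has $P(s) \subseteq \Omega$ at every state, hence $K\Omega = \Omega$ and $U\Omega = \emptyset$ identically, so the state you construct \emph{does} lie in $K\Omega$ and no contradiction arises. (This is precisely why the paper must redefine the operator in Definition \ref{d01} before it can assert $\neg K'\Omega = U\Omega$.) The symmetry property you invoke is moreover explicitly disclaimed by the paper for the Eq.~\ref{u01} notion of unawareness when $UE \neq \emptyset$, so it is not available as a free hypothesis. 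Your fallback via the Dekel--Lipman--Rustichini argument only yields that \emph{one of} necessitation, plausibility, KU introspection and AU introspection must fail, not that necessitation in particular fails, so it does not deliver the stated conclusion either. (For completeness: the paper's own proof has a mirror-image weakness --- from ``negative introspection $\Rightarrow$ necessitation'' and ``unawareness $\Rightarrow$ no negative introspection'' one cannot formally infer ``unawareness $\Rightarrow$ no necessitation'' --- but the argument it intends is the one routed through Remark \ref{r01}, not a state-by-state construction.)
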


\begin{proof}
For all events $E \subseteq \Omega$, we have $\emptyset \subseteq K^{\prime} E \subseteq \Omega$, hence $\emptyset \subseteq K^{\prime} \Omega \subseteq \Omega$. We have $K^{\prime} \emptyset = \emptyset$ vacuously, therefore $\neg K^{\prime} \emptyset = \neg \emptyset = \Omega$. Now, negative introspection implies $\neg K^{\prime} \emptyset \subseteq K^{\prime} \neg K^{\prime} \emptyset = K^{\prime} \Omega$, therefore $\Omega \subseteq K^{\prime} \Omega$. Since we also have have $K^{\prime} \Omega \subseteq \Omega$, therefore $K^{\prime} \Omega = \Omega$ \cite{bacharach1985}. However, Remark \ref{r01} shows that non-trivial unawareness $UE \neq \emptyset$ violates negative introspection for any $E \subseteq \Omega$. Therefore, $UE \neq \emptyset \rightarrow K^{\prime} \Omega \neq \Omega$.
\end{proof}

Necessitation is defined as $K\Omega = \Omega$, and it is assumed as a fundamental property of knowledge $K$ \cite{heifetz2006,li2009,galanis2013,schipper2014}. Current literature applies necessitation to derive contradictory results with respect to unawareness $U$, e.g. \cite{dekel1998,chen2012,tada2021}, therefore arguing that the standard state-space model of unawareness is inconsistent \cite{schipper2014}. 

Theorem \ref{t01} shows that non-trivial unawareness $UE \neq \emptyset$ violates necessitation, therefore it cannot be used to prove the standard model inconsistent. For an agent with non-trivial unawareness $UE \neq \emptyset$, simply assuming necessitation means that she triggers her negative introspection to learn about all states $s \in \Omega$, i.e. it changes her possibility correspondence $P$ so it becomes partitional, therefore her unawareness vanishes. 

On the analysis of unawareness $U$, Theorem \ref{t01} indicates that we are required to derive a new definition of knowledge for an agent that carries non-trivial unawareness, and which is consistent with the standard definitions of knowledge $KE$ and unawareness $UE$ in Eq. \ref{k01} and \ref{u01} for any event $E \subseteq \Omega$. By now, all we have is a non-empty set of possible states $\Omega$, events $E$ and possibility correspondence $P$, the definitions of knowledge $K$ and unawareness $U$, and the proof that $UE \neq \emptyset \rightarrow K^{\prime} \Omega \neq \Omega$. We apply Eq. \ref{k01} and \ref{u01} on the following result.

\begin{definition} \label{d01}
For all $E \subseteq \Omega$, $K^{\prime} E := KE \setminus UE$. 
\end{definition}

The generalisation of the knowledge operator $K^{\prime}$ in Definition \ref{d01} is trivial for all events $E \subset \Omega$ since $KE \cap UE = \emptyset \rightarrow K^{\prime}E = KE$, nonetheless it solves the conflict between the non-trivial unawareness $UE \neq \emptyset$ and the application of the standard knowledge operator $K$ on the full state-space $\Omega$. The intuition is that the set of states which the agent knows $E$ necessarily excludes the states which she is unaware of $E$, so we induce this condition to the full state-space $\Omega$. The interpretation of the generalised knowledge $K^{\prime}$ is the same, i.e. $K^{\prime} E \neq \emptyset$ means 'the agent knows $E$'. Definition \ref{d01} satisfies all standard properties of knowledge $K$ and unawareness $U$, and it revises the standard property of necessitation to the set $K^{\prime} \Omega$, so the non-trivial unawareness $UE \neq \emptyset$ does not disappear\footnote{The generalised knowledge $K^{\prime}$ in Definition \ref{d01} is not the same as the awareness operator defined in literature equal to $A(E) := \Omega \setminus U(E)$ \cite{dekel1998,heifetz2006,galanis2013,fukuda2021}, as it has different properties. For example, in a state-space $\{a,b,c \}$ with $P(a) = \{a \}, U(\{a \}) = \{c \}$, we have $A(\{a \}) = \{b,c \} \neq K^{\prime}(\{a \}) = \{a \}$.} \footnote{Current literature asserts that no possibility correspondence can induce a knowledge operator which does not satisfy the standard property of necessitation \cite{dekel1998,heifetz2006,galanis2013,fukuda2021}. The generalised knowledge operator $K^{\prime}$ in Definition \ref{d01} proves that this is not the case.}. 

\begin{theorem} \label{t02}
For all $E \subseteq \Omega$, the generalised knowledge $K^{\prime}$ in Definition \ref{d01} satisfies:

.

$K^{\prime}E = KE, \text{ for all } E \subset \Omega,$

$K^{\prime} \Omega = K\Omega \setminus U \Omega = \Omega \setminus U \Omega.$ (R necessitation)

\end{theorem}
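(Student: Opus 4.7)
The plan is to derive both claims directly from Definition \ref{d01}, which sets $K^{\prime}E = KE \setminus UE$, by unpacking $UE$ from Eq. \ref{u01} and $K\Omega$ from Eq. \ref{k01}. In both cases the argument reduces to a one-line set-theoretic manipulation, and no additional machinery is needed.

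The key preparatory observation is the disjointness $KE \cap UE = \emptyset$ for every $E \subseteq \Omega$. This is immediate from Eq. \ref{u01}: the set $UE = \bigcap_{i \geq 1}(\neg K)^{i}E$ is contained in its first term $\neg KE$, and $\neg KE \cap KE = \emptyset$ by the definition of the complement. For any proper subset $E \subset \Omega$ this at once gives $K^{\prime}E = KE \setminus UE = KE$, which is the first assertion of the theorem.

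For the second assertion I would apply Definition \ref{d01} at $E = \Omega$ to obtain $K^{\prime}\Omega = K\Omega \setminus U\Omega$. The identity $K\Omega = \Omega$ is immediate from Eq. \ref{k01}, since by the codomain of $P$ we have $P(s) \subseteq \Omega$ for every $s \in \Omega$, so every state lies in $K\Omega$. Substituting gives $K^{\prime}\Omega = \Omega \setminus U\Omega$, which is precisely the R necessitation formula stated in the theorem.

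The main obstacle is conceptual rather than technical. The two parts follow from the same underlying formula $K^{\prime}E = KE \setminus UE$, and one could in principle fold them into a single line; the purpose of stating them separately is to emphasize that the generalisation is invisible on every proper subset and reshapes the operator only on the full state-space, converting the classical $K\Omega = \Omega$ into $K^{\prime}\Omega = \Omega \setminus U\Omega$. Once this framing is made explicit, the proof is nothing more than a direct application of Eq. \ref{k01}, Eq. \ref{u01}, and Definition \ref{d01}, with the disjointness $UE \subseteq \neg KE$ doing all the work.
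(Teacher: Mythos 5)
Your proposal is correct as a proof of the literal statement, and it is essentially the paper's own argument: the paper gives no separate proof of Theorem \ref{t02}, relying instead on the remark preceding Definition \ref{d01} that $KE \cap UE = \emptyset$ implies $K^{\prime}E = KE$, together with direct substitution of $E = \Omega$ into Definition \ref{d01} and the observation $K\Omega = \Omega$ from Eq. \ref{k01}. One point deserves attention, though. Your key lemma $UE \subseteq \neg KE$ is stated, correctly, for \emph{every} $E \subseteq \Omega$, and in particular it applies at $E = \Omega$: there it gives $U\Omega \subseteq \neg K\Omega = \neg \Omega = \emptyset$, so the same one-line computation yields $K^{\prime}\Omega = K\Omega \setminus U\Omega = K\Omega = \Omega$ and the ``R necessitation'' clause collapses to ordinary necessitation. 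In other words, if $U\Omega$ is computed from Eq. \ref{u01} with the standard $K$, the second clause of the theorem is true but vacuous; it carries content only if $U\Omega$ is instead read as the derived set $\bigcup_{2^{\Omega}} UE$ of Remark \ref{r03} (which the paper obtains circularly, after the fact). This is a defect of the theorem statement rather than of your proof, but your write-up proves the disjointness uniformly and then silently declines to use it at $E = \Omega$; you should either flag why, or note explicitly that the non-degeneracy of $K^{\prime}\Omega = \Omega \setminus U\Omega$ requires a definition of $U\Omega$ other than Eq. \ref{u01} evaluated at $\Omega$.
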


\begin{remark} \label{r02}
For all $E \subseteq \Omega$, $UE \subseteq U \Omega$.
\end{remark}

\begin{proof}
Assume the set partition $\Omega = K^{\prime} \Omega \cup \neg K^{\prime} \Omega \rightarrow K^{\prime} \Omega = \Omega \setminus \neg K^{\prime} \Omega$. From Definition \ref{d01} and Eq. \ref{k01}, we have $\neg K^{\prime} \Omega = U \Omega$. From Theorem \ref{t01}, we have $UE \subseteq U\Omega$.
\end{proof}

\begin{remark} \label{r03}
For all $E \subseteq \Omega$, $U \Omega = \bigcup_{2^{\Omega}} UE$.
\end{remark}

\begin{proof}
From Remark \ref{r02}, $\bigcup_{2^{\Omega}} UE \subseteq U \Omega$ for all $E \subseteq \Omega$. From Theorem \ref{t02}, $\bigcup_{2^{\Omega}} UE = U \Omega$.
\end{proof}

The set $U \Omega$ is the set of states which the agent has non-trivial unawareness. Theorem \ref{t02} derives a revised version of the standard property of necessitation, so called 'R necessitation' equal to $K^{\prime} \Omega = \Omega \setminus U \Omega$. R necessitation is equal to standard necessitation if there is no non-trivial unawareness, i.e. if $UE = \emptyset$ for all events $E \subseteq \Omega$, then $K^{\prime} \Omega = K \Omega = \Omega$.

Definition \ref{d01} satisfies the standard properties of KU introspection equal to $K(UE) = \emptyset$, AU introspection equal to $UE \subseteq U(UE)$, and symmetry\footnote{Property of symmetry for unawareness equal to $UE = U(\neg E)$ is often regarded as a desired property, see \cite{modica1994,heifetz2006}. Nonetheless, if we have $UE \neq \emptyset$, then the traditional definition of unawareness in Eq. \ref{u01} does not satisfy symmetry.} equal to $UE = U(\neg E)$, for all events $E \subset \Omega$, following directly from Theorem \ref{t02}. For the full state-space $\Omega$, Theorem \ref{t02} and Remark \ref{r03} imply that we have $K^{\prime} (U\Omega) = K^{\prime} (\cup_{2^{\Omega}} UE) = K (\cup_{2^{\Omega}} UE) = \emptyset$, so KU introspection obtains. Moreover, Theorem \ref{t02} and KU introspection implies $\neg K^{\prime} (U\Omega) = \Omega$, then $K^{\prime} \neg K^{\prime} (U\Omega) = \Omega \setminus U \Omega$, then $\neg K^{\prime} \neg K^{\prime} (U\Omega) = U \Omega$, and so on. Therefore, applying the definition of unawareness in Eq. \ref{u01} with respect to $K^{\prime}(U\Omega)$ provides the property of AU introspection with respect to the full state-space $\Omega$ equal to

\begin{equation} \label{u02}
\begin{array}{rl}
U(U\Omega) &= \neg K^{\prime} U\Omega \cap \neg K^{\prime} \neg K^{\prime} U\Omega \cap \neg K^{\prime}\neg K^{\prime} \neg K^{\prime} U\Omega \cap \dots \\
\\
&= \Omega \cap U\Omega \cap \Omega \cap U\Omega \cap \Omega \cap \dots = U\Omega . \\
\end{array}
\end{equation}

At last, Theorem \ref{t02} implies $K^{\prime}(\neg \Omega) = K^{\prime} \emptyset = \emptyset$, then $\neg K^{\prime} \emptyset = \Omega$, then $K^{\prime} \neg K^{\prime} \emptyset = K^{\prime} \Omega = \Omega \setminus U \Omega$, then $\neg K^{\prime} \neg K^{\prime} \emptyset = K^{\prime} \Omega = U \Omega$, and so on. Therefore, applying the definition of unawareness in Eq. \ref{u01} with respect to $\neg \Omega$, we have

\begin{equation} \label{u03}
\begin{array}{rl}
U(\neg \Omega) &= \neg K^{\prime} \emptyset \cap \neg K^{\prime} \neg K^{\prime} \emptyset \cap \neg K^{\prime}\neg K^{\prime} \neg K^{\prime} \emptyset \cap \dots \\
\\
&= \Omega \cap U\Omega \cap \Omega \cap U\Omega \cap \Omega \cap \dots = U\Omega , \\
\end{array}
\end{equation}

\noindent which is equal to Eq. \ref{u02}, therefore satisfying the property of symmetry with respect to the full state-space $\Omega$ equal to $U \Omega = U(\neg \Omega)$.

\subsection{Addressing the Dekel-Lipman-Rustichini result} \label{Addressing the Dekel-Lipman-Rustichini result}

The study of \cite{dekel1998} is classical in the literature for it proves contradictory results within the standard state-space model of unawareness using the properties of necessitation, plausibility, KU introspection and AU introspection. In short, \cite{dekel1998} show that an unaware agent satisfying the standard properties of knowledge and unawareness must be indeed aware of everything. \cite{dekel1998} interpret this result as indicating that the state-space model of unawareness is inconsistent\footnote{Curiously, \cite{dekel1998} drop the property of negative introspection while inadvertently invoking necessitation. Theorem \ref{t01} shows that negative introspection leads to necessitation, for both properties are violated if there is non-trivial unawareness.} \cite{dekel1998,heifetz2006,chen2012,galanis2013,schipper2014,fukuda2021}. 

\cite{dekel1998} show that for an agent with non-trivial unawareness $UE \neq \emptyset$, it implies

\begin{equation} \label{dlr}
\begin{array}{rl}
\emptyset \neq UE &\subseteq U(UE) \text{ (AU introspection)} \\
\\
&\subseteq \neg K \neg K (UE) \text{ (plausibility, Eq. \ref{u01})} \\
\\
&= \neg K \Omega \text{ (KU introspection)} \\
\\
&= \emptyset . \text{ (necessitation)} \\
\\
\end{array}
\end{equation}

The last line in Eq. \ref{dlr} derives by applying the property of necessitation, for we have $K\Omega = \Omega \rightarrow \neg K\Omega = \emptyset$. As a result, the contradiction $\emptyset \neq UE = \emptyset$. The generalised knowledge $K^{\prime}$ in Definition \ref{d01} solves this contradiction, for we have 

\begin{equation} \label{dlrR}
\begin{array}{rl}
\emptyset \neq UE &\subseteq U(UE) \subseteq \neg K^{\prime} \neg K^{\prime} (UE) \\
\\
&= \neg K^{\prime} \Omega = U \Omega . \text{ (R necessitation)} \\
\\
\end{array}
\end{equation}

\noindent therefore $UE \subseteq U \Omega$, see Remark \ref{r02}. Overall, the generalised knowledge $K^{\prime}$ preserves the agent's non-trivial unawareness $UE \neq \emptyset$ through the theoretical manipulations of the knowledge operator, therefore preserving model consistency.

\section{Related literature} \label{Related literature}

Literature on unawareness begins with the classical study of \cite{aumann1976} and the notion of 'common knowledge', which refers not only that two agents know an event, but each knows that the other knows it, and each knows that the other knows that she knows it, and so on. For two agents having the same prior beliefs, if their posteriors for an event include common knowledge, then their posteriors become equal \cite{aumann1976}.

The study of \cite{aumann1976} motivates several approaches to model the epistemic states of agents. Some early studies are based on modal logic, for the events are sentences which may be true or false, and modal operators over sentences indicate the knowledge and awareness of the agent \cite{modica1994,modica1999}. Awareness is defined as a concious uncertainty where the agent knows the event, or she knows that she does not know the event\footnote{Awareness derives from negative introspection \cite{modica1994}. i.e. for a sentence $\varphi$ and the modal operator $k \varphi$ meaning 'agent knows $\varphi$', awareness is defined as $a \varphi := k \varphi \vee k \neg k \varphi$. Literature also refers to this approach as Kripke-structures model \cite{schipper2014}.}. These studies propose the now-standard definition of unawareness equal to the negation of awareness, meaning that the agent does not know the event, and she does not know that she does not know the event (and so on). A modal logic of unawareness requires the weakening of some axioms of logic and inference related to monotonicity and the knowledge of tautologies \cite{modica1994,modica1999}.

Further models follow a set-theoretic approach, which became predominant in literature. \cite{bacharach1985} derive one of the first state-space models of knowledge, which is based on a universal set of states of the world, events as subsets of the universal set, and a knowledge operator over states modelling the events known by the agent. \cite{bacharach1985} propose four properties of knowledge to be satisfied by a rational agent, namely the properties of non-delusion, conjunction, positive introspection and negative introspection\footnote{For events $E_{1}, E_{2}$ and a knowledge operator $K$, non-delusion is defined by $KE_{1} \subseteq E_{1}$, conjunction is defined by $K(E_{1} \cap E_{2}) = KE_{1} \cap KE_{2}$, positive introspection is defined by $KE_{1} \subseteq KKE_{1}$.}. Hence, a rational agent obtains a partitional information structure regarding all possible states. The properties proposed by \cite{bacharach1985} became standard in literature, e.g. \cite{bacharach1985,samuelson2004,heifetz2006,li2009,schipper2014}. 

The partitional model of knowledge implies that the agent becomes aware of everything, by monotonicity and negative introspection, so it imposes a limitation on modelling unawareness as the negation of awareness. \cite{geanakoplos1989} is seminal in modelling knowledge of events by means of non-partitional sets of states. The intuition is that at some true state $s$, the agent may consider several states to be possible, and any state may be included in more than one possibility correspondence of the agent. In this approach, the agent knows an event if that event includes all states that are included in her possibility correspondence\footnote{For example, for a state-space $\{a, b \}$ and a possibility correspondence $P(a) = \{a \}, P(b) = \{a, b \}$, the agent knows events $\{ a \}$ and $\{a, b \}$, however she does not know the event $\{ b \}$.}. 

The non-partitional model proposed by \cite{geanakoplos1989} provides the theoretical structure to define unawareness (negation of awareness) within the state-space model of knowledge, i.e. it allows to derive the set of states which the agent is unaware of some event \cite{geanakoplos1989,dekel1998}. It also derives the standard properties of unawareness consistent with the early modal-logic studies of \cite{modica1994,modica1999}, referring to plausibility, KU introspection and AU introspection\footnote{On the analysis of common knowledge and consensus across agents, \cite{geanakoplos1989} propose additional properties, e.g. ref. 'knowing that you know', 'nested', and 'balanced'.} \cite{dekel1998,heifetz2006,li2009,chen2012,schipper2014}. 

Not long after the study of \cite{geanakoplos1989}, \cite{dekel1998} apply the same non-partitional model to show that under the standard properties of knowledge and unawareness, an unaware agent is always aware of everything, see Section \ref{Addressing the Dekel-Lipman-Rustichini result}, i.e. assuming all the standard properties, the unaware agent obtains the same epistemic state as the rational agent with a partitional knowledge. Studies interpret this contradictory result as indicating that the state-space approach is not capable to properly model unawareness, so proposing its full refutation \cite{dekel1998,heifetz2006,li2009,chen2012,galanis2013,schipper2014}. 

Several studies address the result in \cite{dekel1998}. One stream of studies explores further contradictions within the state-space model. A major question is whether the knowledge of an agent is induced by a possibility correspondence \cite{dekel1998,fukuda2021,tada2021}, for a model applying a primitive knowledge operator and axiomatic properties may promote model generalisation \cite{dekel1998,chen2012,schipper2014,fukuda2021,tada2021}. In this axiomatic approach, studies show that not all intuitive properties are consistent with each other, thus some properties must be abandoned. For example, the studies of \cite{chen2012,tada2021} show that by assuming necessitation or symmetry, the properties of negative introspection, KU introspection and AU introspection become operationally equivalent to each other, and especially equivalent to the empty set $\emptyset$. \cite{fukuda2021} show that under a generalised state-space compatible with all properties in the literature, the knowledge of an agent does not always satisfy AU introspection.

A second stream of studies proposes new models intended to solve the result in \cite{dekel1998}. The predominant set-theoretic approach is to model multiple state-spaces and to define knowledge and unawareness with respect to separate subspaces. \cite{heifetz2006} model a lattice of state-spaces which are partially ordered by 'expressiveness' of each subspace, therefore inducing a surjective projection from the more expressive to the less expressive subspace. An agent knows an event with respect to the subspace which she might be aware, and unawareness refers to states in other subspaces which the agent cannot reach by her own introspection\footnote{In detail, each subspace includes mutually exclusive states which are considered possible by the agent, while unaware states are included in 'non-expressible' subspaces. Unaware states are not included in either the set of known events or in their complements \cite{heifetz2006}.}, so called 'non-expressible' subspaces. Non-expressible subspaces imply a new property called 'weak necessitation'\footnote{Formally, assume a universal space $\Omega$, subspaces $S, S^{\prime} \subseteq \Omega$, and the relation $S^{\prime} \succeq S$ meaning '$S^{\prime}$ is at least as expressive as $S$'. Assume that $\cup S^{\prime}$ is the set of all extensions of $S$ to equal or more expressive vocabularies. For an event $E \subseteq S$, weak necessitation implies that the agent is aware of $E$ so $K( \cup S^{\prime}E)$ \cite{dekel1998}.}, which solves the contradictory result in \cite{dekel1998}.

\cite{li2009} model the full state-space as a cartesian product between the set of all possible states which may be true or false and the agent's awareness operator. Events carry information not only about the true states (factual information) but also about other possible states (awareness information), so events may change the less-detailed 'subjective state-space' of the agent\footnote{The set of subjective state-spaces also produce a lattice of state-spaces ordered by expressiveness \cite{li2009}, similar to the model in \cite{heifetz2006}.}. While the agent knows an event $E$ with respect to her subjective state-space only, that event allows the agent to become aware of possible states which she might be unaware before knowing $E$\footnote{For example, for the event 'Bob died', the agent knows that Bob died, and she also becomes aware of several possible states which she was unaware before knowing that Bob died, e.g. there was some creature named Bob, or Bob may be his nickname; Bob could be a person, a pet, a character in a story; there may be an accident, a disease, a crime. Events that contain the same factual information but incorporate more awareness are defined as 'elaborations' of an event \cite{li2009}.}. Subjective state-spaces imply the property of 'subjective necessitation' which is equivalent to necessitation with respect to the agent's subjective state-space only, therefore proposing a solution for the result in \cite{dekel1998}.

\cite{fukuda2021} recently provide a rigorous generalised model which combines the standard state-space model by \cite{geanakoplos1989,dekel1998} and the multiple state-spaces by \cite{heifetz2006}. The generalised state-space is defined as a triple including a complete algebra, a collection of partial orders on subsets and a collection of surjective projections on subsets as in \cite{heifetz2006}. An event $E$ is defined within the base subspace which it is expressed, so the knowledge of an agent depends on the subspace which she is aware. The generalised model fully replicates the properties of the lattice model by \cite{heifetz2006}, however it does not satisfy AU introspection when applied on the standard non-partitional model\footnote{\cite{fukuda2021}, Section 3.1 shows a detailed analysis of all standard properties of knowledge and unawareness following the application of the general model on the standard non-partitional approach.}. Inspection of the model shows that this limitation derives from the property of necessitation inherited from the standard state-space approach. 

\section{Discussion} \label{Discussion}

There are two relevant observations. First, necessitation is a partitional property. It derives from early decision models where the knowledge of agents partitions the full state-space, therefore inducing a belief measure on events \cite{aumann1976,samuelson2004}. The application of the unawareness operator in Eq. \ref{u01} for some event $E$ often leads to the expression $K\Omega$. It would be rather spontaneous to apply the standard definition of knowledge in Eq. \ref{k01} hence invoking necessitation, $K\Omega = \Omega$. From this stage and on, further iterations of the knowledge operator keep jumping between the full state-space $\Omega$ and the empty set $\emptyset$, so any unawareness vanishes\footnote{We refer to iterations of the form

\begin{equation*}
\begin{array}{ccc}
\begin{array}{rl}
KE &= \{s \} , \\
\neg KE &= \{s^{\prime} \} , \\
K \neg KE &= K(\{s^{\prime} \}) = \emptyset , \\
\neg K \neg KE &= \Omega , \\
K \neg K \neg KE &= K\Omega , \\
\end{array} &\text{or}&
\begin{array}{rl}
KE^{\prime} &= \emptyset , \\
\neg KE^{\prime} &= \Omega , \\
K \neg KE^{\prime} &= K\Omega , \\
\end{array}
\end{array}
\end{equation*}

\noindent for states $s, s^{\prime} \in \Omega$ and events $E, E^{\prime} \subseteq \Omega$. For the unawareness operator in Eq. \ref{u01}, necessitation implies $UE = \Omega \cap \emptyset \cap \dots = \emptyset$.}. This is clearly unwanted for a consistent model of unawareness. Models on multiple state-spaces are successful in preserving non-trivial unawareness precisely by deriving more restrictive versions of necessitation, e.g. weak necessitation in \cite{heifetz2006}, and subjective necessitation in \cite{li2009}. The property of R necessitation in Theorem \ref{t02} preserves non-trivial unawareness within the standard state-space model.

Second, the knowledge $KE$ does not imply that the agent knows all states within that event, for we may have $E \setminus KE \neq \emptyset$. Likewise, the full state-space $\Omega$ includes states which the agent is unaware of some events. The standard knowledge operator implies $K \Omega = \Omega$, for it does not reflect this unawareness structure, especially when $\Omega$ is infinite. An unaware agent does not know what she does not know, although she knows that there are things she does not know yet \cite{modica1994,dekel1998}. Definition \ref{d01} derives the unawareness set $U \Omega$, such that for a learning agent that refines her possibility correspondence and becomes aware of new events, it implies $s \in K^{\prime} \Omega \rightarrow s \notin U \Omega$. Now, if there is always something new to learn, we may assume an infinite unawareness set so we have $|U \Omega| = \infty, |K^{\prime} \Omega| < \infty, \Omega = K^{\prime} \Omega \cup U \Omega \rightarrow |\Omega| = \infty$\footnote{$|\Omega|$ refers to the number of elements in the set $\Omega$.}. The generalised knowledge $K^{\prime}$ in Definition \ref{d01} provides the appropriate set-theoretic representation of non-trivial unawareness, including the model of infinite state-space.


\begin{thebibliography}{50} 
\bibitem{aumann1976}
Aumann, R. A. (1976). Agreeing to disagree. \emph{The Annals of Statistics, 4}, 1236-1239
\bibitem{bacharach1985}
Bacharach, M. (1985). Some extensions of a claim of Aumann in an axiomatic model of knowledge. \emph{Journal of Economic Theory, 37}(1), 167-190.
\bibitem{chen2012}
Chen, Y. C., Ely, J. C., \& Luo, X. (2012). Note on unawareness: Negative introspection versus AU introspection (and KU introspection). \emph{International Journal of Game Theory, 41}, 325-329.
\bibitem{dekel1998}
Dekel, E., Lipman, B. L., \& Rustichini, A. (1998). Standard state-space models preclude unawareness. \emph{Econometrica, 66}(1), 159-173.
\bibitem{fukuda2021}
Fukuda, S. (2021). Unawareness without AU Introspection. \emph{Journal of Mathematical Economics, 94}, 102456.
\bibitem{galanis2013}
Galanis, S. (2013). Unawareness of theorems. \emph{Economic Theory, 52}, 41-73.
\bibitem{geanakoplos1989}
Geanakoplos, J. (1989). Game theory without partitions, and applications to speculation and consensus, Cowles Foundation Discussion Paper no. 914.
\bibitem{heifetz2006}
Heifetz, A., Meier, M., \& Schipper, B. C. (2006). Interactive unawareness. \emph{Journal of Economic Theory, 130}(1), 78-94.
\bibitem{li2009}
Li, J. (2009). Information structures with unawareness. \emph{Journal of Economic Theory, 144}(3), 977-993.
\bibitem{modica1994}
Modica, S., \& Rustichini, A. (1994). Awareness and partitional information structures. \emph{Theory and Decision, 37}, 107-124.
\bibitem{modica1999}
Modica, S., \& Rustichini, A. (1999). Unawareness and partitional information structures. Games and economic Behavior, 27(2), 265-298.
\bibitem{samuelson2004}
Samuelson, L. (2004). Modeling knowledge in economic analysis. \emph{Journal of Economic Literature, 42}(2), 367-403.
\bibitem{schipper2014}
Schipper, B. C. (2014). Unawareness — a gentle introduction to both the literature and the special issue. \emph{Mathematical Social Sciences, 70}, 1-9.
\bibitem{tada2021}
Tada, Y. (2021). Note: AU Introspection and Symmetry under Non-Trivial Unawareness. \emph{SSRN Working paper 3906714}.


\end{thebibliography}
\end{document}